\newcommand{\ol}{\overline}
\newcommand{\eps}{\varepsilon}
\newcommand{\emp}{\emptyset}
\newcommand{\Sig}{\Sigma}
\newcommand{\sig}{\sigma}
\newcommand{\noin}{\noindent}
\newcommand{\bi}{\begin{itemize}}
\newcommand{\ei}{\end{itemize}}
\newcommand{\be}{\begin{enumerate}}
\newcommand{\ee}{\end{enumerate}}
\newcommand{\bd}{\begin{description}}
\newcommand{\ed}{\end{description}}
\newcommand{\bq}{\begin{quote}}
\newcommand{\eq}{\end{quote}}
\newcommand{\txt}[1]{\mbox{ #1 }}
\newcommand{\floor}[1]{\lfloor #1 \rfloor}
\newcommand{\cD}{{\mathcal D}}
\newtheorem{cor}[theorem]{Corollary}
\newtheorem{conj}[theorem]{Conjecture}
\title{Syntactic Complexity of Finite/Cofinite, Definite, and Reverse Definite Languages
\thanks{This work was supported by the Natural Sciences and Engineering Research Council of Canada under grant No.~OGP0000871
}
}
\author{Janusz~Brzozowski and David Liu
 }
\authorrunning{Brzozowski, Liu}   
\institute{David R. Cheriton School of Computer Science, University of Waterloo, \\
Waterloo, ON, Canada N2L 3G1\\
{\tt \{brzozo, dyliu\}@uwaterloo.ca}
}
\begin{document}

\maketitle
\begin{abstract}
We study the syntactic complexity of finite/cofinite, definite and reverse definite languages.
The syntactic complexity of a class of languages is defined as the maximal size of  syntactic semigroups of languages from the class, taken as a function of the state complexity $n$ of the languages. 
We prove that $(n-1)!$ is a tight upper bound for finite/cofinite languages and that it can be reached only if the alphabet size is greater than or equal to 
$(n-1)!-(n-2)!$. We prove that the bound is also $(n-1)!$ for reverse definite languages, but the alphabet size is $(n-1)!-2(n-2)!$.
We show that $\lfloor e\cdot (n-1)!\rfloor$ is a lower bound on the syntactic complexity of definite languages, and  conjecture that this is also an upper bound, and  that the alphabet size required to meet this bound is $\floor{e \cdot (n-1)!} - \floor{e \cdot (n-2)!}$.
We prove the conjecture for $n\le 4$.
\medskip

\noin
{\bf Keywords:}
definite, finite automaton, finite/cofinite, regular language, reverse definite, syntactic complexity, syntactic semigroup
\end{abstract}

\section{Introduction}

A language is \emph{definite} if it can be decided whether a word $w$ belongs to it simply by examining 
the suffix of $w$ of  some fixed length.
The class of definite languages was the very first subclass of regular languages to be considered: it was introduced in 1954 in the classic paper by Kleene~\cite{Kle54}. 
It was then studied in 1963 by Perles, Rabin, and Shamir~\cite{PRS63}, and Brzozowski~\cite{Brz63}, in 1966 by Ginzburg~\cite{Gin66}, 
and later by several others.
Definite languages were  revisited  in 2009 by Bordihn, Holzer and Kutrib~\cite{BHK09} in connection with state complexity. 
\emph{Reverse definite} languages were first studied by Brzozowski~\cite{Brz63}. 
Here membership of $w$ can be determined by  its prefix of some fixed length. 
The class of \emph{finite and cofinite} languages is the intersection of the definite and reverse definite classes.
Here testing for membership can be done by checking all words shorter than some fixed length.
These three classes appear at the bottom of the dot-depth hierarchy~\cite{BrSi73} of star-free languages, below generalized definite languages and locally testable languages. 
All three classes are boolean algebras. 
The semigroup $S$ of a finite/cofinite language is \emph{nilpotent}: It has a single idempotent $e$ which is a zero, and is characterized  by the equations $eS=Se=e$.
For definite (reverse definite) languages every idempotent $e$ is a right zero, that is, $Se=e$ (respectively, a left zero, that is, $eS=e$).

We study the sizes of syntactic semigroups of finite/cofinite, definite, and reverse definite  languages.
If $L\subseteq\Sig^*$ is a regular language over alphabet $\Sig$, its syntactic semigroup is defined by the Myhill congruence~\cite{Myh57} $\approx_L $: For $x,y\in\Sig^*$,
\begin{equation*}
x \approx_L y \mbox{ if and only if } uxv\in L  \Leftrightarrow uyv\in L\mbox { for all } u,v\in\Sig^*.
\end{equation*}
The set $\Sig^+/ \approx_L$ of equivalence classes of the relation $\approx_L$ is  the \emph{syntactic semigroup} of $L$. 
It is well-known that this semigroup is isomorphic to the semigroup $T_L$ of transformations performed by non-empty words in  the minimal deterministic finite automaton (DFA) recognizing $L$~\cite{McPa71}, and it is usually convenient to deal with the latter semigroup.
It is obvious that the transformation semigroup of the minimal DFA of $L$ is identical to that of the minimal DFA of $\ol{L}$, the complement of $L$.

The \emph{syntactic complexity} $\sig(L)$ of a language $L$ is the size of its syntactic semigroup, and $\sig(L)=|T_L|$, where $|S|$ denotes the cardinality of a set $S$.
Syntactic complexity can vary significantly among languages with the same state complexity~\cite{BrYe11}, where the \emph{state complexity} of a language is the number of states in its minimal DFA.

The observation that $n^n$ is a tight upper bound on the size of the transformation semigroup of a DFA with $n$ states was first made by Maslov~\cite{Mas70} in 1970, although this follows immediately from a 1935 result of Piccard~\cite{Pic35}, who showed that three generators suffice to produce all transformations of a set of $n$ elements. 
The interest in syntactic complexity of \emph{subclasses} of regular languages is new. 
In 2003--2004 Holzer and K\"onig~\cite{HoKo04}, and  Krawetz, Lawrence and Shallit~\cite{KLS03} studied \emph{unary} and \emph{binary} languages. 
In 2011 Brzozowski and Ye~\cite{BrYe11} showed the following bounds:
\emph{right ideals}---tight upper bound $n^{n-1}$; \emph{left ideals}---lower bound $n^{n-1}+n-1$; 
\emph{two-sided ideals}---lower bound $n^{n-2}+(n-2)2^{n-2}+1$.
In 2012 Brzozowski, Li and Ye~\cite{BLY11} found the following bounds:
\emph{prefix-free} languages---tight upper bound $n^{n-2}$; \emph{suffix-free} languages---lower bound 
$(n-1)^{n-2}+n-2$;
\emph{bifix-free} languages---lower bound 
$(n-1)^{n-3}+(n-2)^{n-3}+(n-3)2^{n-3}$; \emph{factor-free} languages---lower bound 
$(n-1)^{n-3}+(n-3)2^{n-3}+1$.
Also in 2012 tight upper bounds were found for three subclasses of star-free languages by Brzozowski and Li~\cite{BrLi11}:
\emph{monotonic} languages---$C_n^{2n-1}$; \emph{partially monotonic} 
languages---$f(n)=\sum_{k=0}^{n-1} C^{n-1}_k C^{n+k-2}_k$;
\emph{nearly monotonic} languages---$f(n)+n-1$, where $C^i_j$ is the binomial coefficient
\emph{$i$ choose $j$}.
It was conjectured in~\cite{BrLi11} that the bound for nearly monotonic languages is also a tight upper bound for star-free languages.
That bound is asymptotically $2^{-3/4}{(\sqrt{2}+1)^{2n-1}}/{\sqrt{\pi (n-1)}}$.

We prove that $(n-1)!$ is a tight upper bound for finite/cofinite languages, and that a growing alphabet of size at least
$(n-1)!-(n-2)!$ is required to reach the bound.
For reverse definite languages the bound is also $(n-1)!$, but the alphabet size is now 
$(n-1)!-2(n-2)!$.
We  show that $\lfloor e\cdot (n-1)!\rfloor$ is a lower bound for definite languages, and that it can be reached with an alphabet of size $\lfloor e\cdot (n-1)!\rfloor - \lfloor e\cdot (n-2)!\rfloor$.
We conjecture that this is also an upper bound, and 
prove the conjecture  for $n\le 4$.

There is a lack of left-right symmetry in several results for syntactic complexity in spite of the fact that the syntactic congruence is symmetric.
Thus, in the case of ideals~\cite{BrYe11}, it was easy to find a tight upper bound for right ideals, but no tight upper bound is known for left ideals. It was easy to find a tight upper bound for prefix-free languages, but no tight upper bound is known for suffix-free languages~\cite{BLY11}.
This happens again here. We have a tight upper bound for reverse definite languages, but  no tight upper bound for definite languages. 

Section~\ref{sec:prelims}
contains some preliminary material. Sections~\ref{sec:FC}--\ref{sec:DEF}
discuss the syntactic complexity of finite/cofinite, reverse definite, and definite languages, respectively, and Section~\ref{sec:conc} concludes the paper.

\section{Preliminaries}\label{sec:prelims}

A {\em transformation} of a set $Q$ is a mapping of $Q$ into itself. We consider only transformations of finite sets, and assume without loss of generality  that $Q = \{1,2,\ldots, n\}$. If $t$ is a transformation of $Q$, 
and  $i \in Q$, then $it$ is the image of $i$ under $t$.  
An arbitrary transformation can be written in the form
\begin{equation*}\label{eq:transmatrix}
t=\left( \begin{array}{ccccc}
1 & 2 &   \cdots &  n-1 & n \\
i_1 & i_2 &   \cdots &  i_{n-1} & i_n
\end{array} \right),
\end{equation*}
where $i_k = kt$, $1\le k\le n$, and $i_k\in Q$. We also use the notation $t = [i_1,i_2,\ldots,i_n]$ for the transformation $t$ above. 

If $X$ is a subset of $Q$, then $Xt = \{it \mid i \in X\}$, and the {\em restriction} of $t$ to $X$, denoted by $t|_X$, is a mapping from $X$ to $Xt$ such that $it|_X = it$ for all $i \in X$. 

A \emph{permutation} of $Q$ is a mapping of $Q$ \emph{onto} itself. 
A transformation $t$ is \emph{permutational} if there exists some $X \subseteq Q$ with $|X| \ge 2$ such that $t|_X$ is a permutation of $X$. Otherwise, $t$ is \emph{non-permutational}.

A \emph{constant} transformation, denoted by $Q \choose j$, has $it=j$ for all $i$.

The {\em composition} of two transformations $t_1$ and $t_2$ of $Q$ is a transformation $t_1 \circ t_2$ such that $i (t_1 \circ t_2) = (i t_1) t_2$ for all $i \in Q$. We usually omit the composition operator. 

A~\emph{deterministic finite automaton} (DFA) is a quintuple $\cD=(Q, \Sig, \delta, q_1,F)$, where 
$Q$ is a finite, non-empty set of \emph{states}, $\Sig$ is a finite non-empty \emph{alphabet}, $\delta:Q\times \Sig\to Q$ is the \emph{transition function}, $q_1\in Q$ is the \emph{initial state}, and $F\subseteq Q$ is the set of \emph{final states}. We extend $\delta$ to $Q \times \Sig^*$ in the usual way.
The DFA $\cD$ \emph{accepts} a word $w \in \Sigma^*$ if ${\delta}(q_1,w)\in F$. 
The set of all words accepted by $\cD$ is the language $L(\cD)$ of $\cD$. 
Two states of a DFA are \emph{distinguishable} if there exists a word $w$
which is accepted from one of the states and rejected from the other. Otherwise,
the two states are \emph{equivalent}.
A DFA is \emph{minimal} if all of its states are reachable from the initial state and no two states are equivalent. 
All the minimal DFA's of a given language $L$ are isomorphic. 

The notion of a DFA $\cD$ connects transformations to regular languages. Given a regular language $L$, its minimal DFA $\cD=(Q,\Sig, \delta,q_1, F)$, and a word $w \in \Sigma^+$, the transition function $\delta(\cdot, w)$  is a \emph{transformation} of $Q$, the {transformation caused by $w$}. When convenient, we identify a word with its corresponding transformation.

The (\emph{left}) \emph{quotient} of a language $L\subseteq \Sig^*$ by a word $w\in\Sig^*$ is the language $L_w=\{x\mid wx\in L\}$. 
Note that $L_\eps=L$, where $\eps$ is the empty word.
The \emph{quotient DFA} of a regular language $L$ is 
$\cD=(Q, \Sig, \delta, q_1,F)$, where $Q=\{L_w\mid w\in\Sig^*\}$, $\delta(L_w,a)=L_{wa}$, 
$q_1=L_\eps=L$,  and $F=\{L_w\mid \eps \in L_w \}$.
The quotient DFA is isomorphic to the minimal DFA accepting $L$.

\section{Finite/Cofinite Languages} \label{sec:FC}

One of the simplest classes of regular languages is the class of finite and cofinite languages, where a language is \emph{cofinite} if its complement is finite. 
Since the syntactic complexity bounds for finite and cofinite languages are identical,  we restrict our analysis here to finite languages.

Let $L$ be a regular language and $\cD = (Q, \Sigma, \delta, q_1, F)$ be its minimal DFA. 
It is well-known that $L$ is finite/cofinite if and only if there exists a numbering $1,\dots, n$ on $Q$ so that for all $w \in \Sigma^*$, $\delta(i,w) = j$ implies that $i < j$ or $i = j = n$.
We define the set $A_n$ of transformations on $\{1,2, \dots, n\}$ with these properties: 
$$A_n = \{t \mid it > i \; \forall \; i = 1,\dots,n-1, \txt{and} nt = n\}.$$
It is clear that $A_n$ is a semigroup under composition of size $(n-1)!$.

\begin{theorem} 
\label{thm:FC}
Let $L$ be a finite or cofinite language with state complexity $n$. Then the syntactic complexity of $L$ satisfies $\sigma(L) \le (n-1)!$ and this bound is tight.
\end{theorem}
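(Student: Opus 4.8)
The plan is to establish the two halves of the statement separately. The upper bound $\sigma(L)\le(n-1)!$ is essentially a restatement of the structural characterization quoted above: if $\cD=(Q,\Sigma,\delta,q_1,F)$ is the minimal DFA of $L$, which has $n$ states, then we may number $Q=\{1,\dots,n\}$ so that for every nonempty word $w$ we have $\delta(i,w)>i$ for $i<n$ and $\delta(n,w)=n$, i.e.\ $\delta(\cdot,w)\in A_n$. Since $T_L$ consists precisely of these transformations, $T_L$ is a subsemigroup of $A_n$, so $\sigma(L)=|T_L|\le|A_n|=(n-1)!$. All the real content is therefore in the tightness construction.

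\textbf{A witness.} For tightness I would exhibit a finite language attaining the bound. Take $Q=\{1,\dots,n\}$, $q_1=1$, $F=\{n-1\}$, introduce one input letter for each transformation in $A_n$, let each letter act as that transformation, and let $\cD$ be the resulting DFA and $L=L(\cD)$. The transformation semigroup of $\cD$ is the semigroup generated by the letter-transformations, which is all of $A_n$ since the letters already realize every element of $A_n$ and $A_n$ is closed under composition. So, once we know $\cD$ is minimal, its transformation semigroup is $T_L$, and together with the upper bound this gives $\sigma(L)=(n-1)!$ with state complexity $n$.

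\textbf{Minimality and finiteness.} Let $s=[2,3,\dots,n,n]\in A_n$, which is one of the letters. From state $1$, reading $s^{k}$ reaches state $1+k$ for $0\le k\le n-1$, so every state is reachable. For distinguishability take $i<j$ in $Q$: if $i\le n-2$ then $s^{\,n-1-i}$ sends $i$ to $n-1\in F$ while, since $j\ge i+1$, it sends $j$ into the fixed state $n\notin F$; if $i=n-1$ then $j=n$ and the empty word separates them. Hence $\cD$ is minimal, so its state complexity is $n$. Finally, from state $1$ every word of length $\ge n-1$ ends in state $n$ (each letter strictly increases the state index until $n$ is reached) and $n\notin F$, so $L\subseteq\Sigma^{\le n-2}$ is finite.

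\textbf{Refinement and main obstacle.} The construction above uses an alphabet of size $(n-1)!$; to reach the sharper size $(n-1)!-(n-2)!$ mentioned in the introduction one would keep only the ``indecomposable'' letters. A short computation shows $t\in A_n$ is a composition of two elements of $A_n$ exactly when $it\ge i+2$ for all $i\le n-2$, and each such $t$ factors as $s\cdot t'$ with $s,t'\in A_n$ again not of this special form; the remaining $(n-1)!-(n-2)!$ transformations therefore still generate $A_n$, and the minimality/finiteness check goes through unchanged since $s$ survives the pruning. The main obstacle is thus not the upper bound but the witness: getting the set of final states right so that all $n$ states are pairwise distinguishable, and — for the optimal alphabet — the combinatorial lemma identifying exactly which elements of $A_n$ are indecomposable.
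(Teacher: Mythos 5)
Your proof is correct and follows essentially the same route as the paper: the upper bound is read off from the characterization $T_L\subseteq A_n$, and the witness is the same DFA with one letter per element of $A_n$ and $F=\{n-1\}$ (your minimality check via powers of $[2,3,\dots,n,n]$ is a clean variant of the paper's, and even patches the paper's omission of the case $i=n-1$). The closing remarks on the reduced alphabet belong to the paper's Theorem~\ref{thm:finAlpha} rather than this statement, but they are consistent with it.
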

\begin{proof}
Let $\cD = (Q, \Sigma, \delta, q_1, F)$ be the minimal DFA of $L$. 
The above discussion implies that we may label the states $Q$ so that $T_L$ is a subsemigroup of $A_n$.
Therefore the bound holds.

Let $n \ge 1$ and $|\Sigma| = (n-1)!$. Let $\cD$ be a $DFA$ with states numbered $\{1,2,\dots,n\}$, initial state $1$, sink state $n$, and a final state $n-1$. For each transformation $t \in A_n$, assign a letter in $\Sigma$ whose input transformation on $\cD$ is exactly~$t$.
To show that $\cD$ is minimal, note that state $i>1$ is reached from the initial state by the transformation $[i, n,n,\ldots,n]$. Also, if $i$ and $j$  are two states and $i < j \le n$, then the transformation $t \in A_n$ that has $it = n-1$, and $kt = n$ for all other $k \neq i$, distinguishes the two states. Hence $\cD$ is minimal and accepts a finite language. 
Therefore the bound is tight. 
\qed
\end{proof}

A natural question is the minimal size of the alphabet required to achieve the upper bound. 
Let $\cD$ be the minimal DFA of a finite language $L$ with $T_L = A_n$.
For any state $i \in Q$ and $a \in \Sigma$, it is clear that $\delta(i,a) \geq i+1$ or $i = n$.
It follows that if an input transformation $t \in A_n$ satisfies $it = i+1$ for \emph{some} $i \in \{1,2,\dots,n-2\}$, then any word $w$ corresponding to $t$ must have length 1, that is, $w$ must be in $\Sigma$.

\begin{theorem} 
\label{thm:finAlpha}
Let $L \subseteq \Sigma^*$ be a finite or cofinite language with state complexity~$n\ge 3$, and suppose that $\sigma(L) = (n-1)!$. Then $$|\Sigma| \ge (n-1)! - (n-2)!$$ and this bound is tight.
\end{theorem}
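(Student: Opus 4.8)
The plan is to first note that, after the relabelling of states used in the proof of Theorem~\ref{thm:FC}, $T_L$ is a subsemigroup of $A_n$; since $|A_n| = (n-1)!$, the hypothesis $\sigma(L) = (n-1)!$ forces $T_L = A_n$. Thus in the minimal DFA $\cD$ of $L$ \emph{every} transformation of $A_n$ must be induced by some non-empty word over $\Sigma$.

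For the lower bound I would invoke the observation stated just before the theorem: a transformation $t \in A_n$ with $it = i+1$ for some $i \in \{1,\dots,n-2\}$ (call such a $t$ \emph{rigid}) can only be induced by a single letter. Since distinct transformations are induced by distinct letters, $|\Sigma|$ is at least the number of rigid transformations in $A_n$. I would compute this number by counting its complement: a \emph{non-rigid} $t \in A_n$ is exactly one with $it \ge i+2$ for every $i \in \{1,\dots,n-2\}$, and counting admissible values of $1t, 2t, \dots$ in turn ($n-2$ choices for $1t$, then $n-3$, and so on, with $(n-2)t, (n-1)t, nt$ all forced to equal $n$) yields exactly $(n-2)!$ non-rigid transformations. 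Hence there are $(n-1)! - (n-2)!$ rigid ones, which is the claimed lower bound.

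For tightness I would take the DFA $\cD$ on states $\{1,\dots,n\}$ with initial state $1$, final set $\{n-1\}$, sink $n$, and alphabet $\Sigma$ consisting of exactly the $(n-1)! - (n-2)!$ rigid transformations of $A_n$, each letter acting as its own transformation. Then $T_L \subseteq A_n$ is immediate, so the heart of the argument is $A_n \subseteq T_L$, i.e.\ that the rigid transformations generate $A_n$; for this it suffices to factor every non-rigid $t$ as a product of two rigid transformations. I would use the shift $s = [2,3,\dots,n-1,n,n]$ (rigid because $n \ge 3$) and define $t'$ by $1t' = 2$ and $jt' = (j-1)t$ for $2 \le j \le n$; one then checks that $t' \in A_n$ (the inequalities $jt' > j$ use non-rigidity of $t$), that $t'$ is rigid (its first coordinate is $2$), and that $st' = t$. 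This gives $T_L = A_n$, hence $\sigma(L) = (n-1)!$. Minimality and finiteness of $L$ follow as in Theorem~\ref{thm:FC}: state $i > 1$ is reached by the transformation $[i,n,\dots,n] \in A_n = T_L$; states $i < j$ with $i \le n-2$ are separated by the transformation $t$ with $it = n-1$ and $kt = n$ for $k \ne i$; the pair $\{n-1,n\}$ is separated by $\eps$; and since every letter sends each non-sink state to a strictly larger state, any accepted word (which must end in state $n-1$) has length at most $n-2$, so $L$ is finite.

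The main obstacle is this generation step in the tightness half — verifying that the rigid transformations alone already generate $A_n$ through the two-factor decomposition above — together with checking that the construction degenerates correctly in the boundary case $n = 3$, where $\Sigma$ collapses to the single letter $[2,3,3]$. The counting for the lower bound and the minimality verification are routine.
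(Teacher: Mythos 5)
Your proof is correct. The lower-bound half is identical to the paper's: you count the complement (transformations with $it\ge i+2$ for all $i\le n-2$) to get $(n-2)!$, hence $(n-1)!-(n-2)!$ transformations that must be single letters. The only place you diverge is the generation step for tightness. The paper takes an arbitrary $t=[j_1,\dots,j_{n-2},n,n]$, sets $k=\min_i\{j_i-i\}-1$, shifts the images down to obtain a rigid $t'=[j_1-k,\dots,j_{n-2}-k,n,n]$, and writes $t=t'\,[2,3,\dots,n-1,n,n]^k$ -- a factorization into a rigid transformation followed by a power of the shift. You instead shift the \emph{indices} up by one, defining $1t'=2$ and $jt'=(j-1)t$, and factor every non-rigid $t$ as a product $st'$ of exactly \emph{two} rigid transformations (with $s$ the shift applied first). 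Both factorizations are valid; yours has the small aesthetic advantage of bounding the word length by $2$ and of making the verification that $t'$ is rigid immediate ($1t'=2$), at the cost of having to check $t'\in A_n$ via non-rigidity of $t$, which you do correctly. Your explicit re-verification of minimality and finiteness of the witness is more careful than the paper, which leaves that implicit.
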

\begin{proof} 
By Theorem \ref{thm:FC}, we may assume that $T_L = A_n.$ The preceding discussion implies that $|\Sigma|$ is at least the number of transformations which satisfy $it = i + 1$ for some $i = 1,\dots, n-2$. Let $G_n \subset A_n$ be the set of these transformations. If we place the restriction $it \neq i + 1$ for all $i \in \{1,2,\dots,n-2\}$ then there are $n -i - 1$ choices for these $it$, and hence a total of $(n-2)!$ such transformations. Therefore $|G_n| = |A_n| - (n-2)! = (n-1)! - (n-2)!.$
Now let $t = [j_1, \dots, j_{n-2}, n, n] \in A_n$ be arbitrary. Let 
$$k = \min_{1 \le i \le n-2} \{j_i - i\} - 1,$$ and $t'=[j_1-k,\ldots,j_{n-2}-k,n,n].$
Then  $t'\in G_n$ and $t = t'[2,3, \dots, n-1, n, n]^k $. Thus $G_n$ generates $A_n$, and the bound is tight. \qed
\end{proof}

\begin{example}
\label{ex:FC}
For $n=4$, the largest semigroup is 
$$A_4=\{{\bf [2,3,4,4]},{\bf [2,4,4,4]},{\bf [3,3,4,4]},[3,4,4,4],{\bf [4,3,4,4]},[4,4,4,4]\},$$
and its minimal generating set is shown in boldface.
\end{example}

\section{Reverse Definite Languages}
\label{sec:RD}

A \emph{reverse definite language} is a language $L \subseteq \Sigma^*$ of the form $L = E \cup F\Sigma^*$, where $E$ and $F$ are finite languages.
Because reverse definite languages are characterized by prefixes of a fixed length, their minimal DFAs (and hence syntactic complexity bounds) are very similar to those of finite/cofinite languages. 
If $L$ has state complexity 1, then either $L=\emp$ or $L=\Sig^*$. Since both these languages are in the finite/cofinite class, the bound $(n-1)!$ of Theorem~\ref{thm:FC} applies. 
For state complexities $n > 1$, we note first that if $\emp$ is not a quotient of $L$, then $L$ is cofinite.
Otherwise, $\emp$ and $\Sigma^*$ are both quotients of $L$. 
Let $\cD = (Q, \Sigma, \delta, q_1, F)$ be the minimal DFA of $L$, and label the states corresponding to $\emp$ and $\Sigma^*$ with $n-1$ and $n$, respectively. 
One can number the other states in $Q$ so that for all words $w \in \Sigma^*$, if $\delta(i, w) = j$ then $i \leq j$ with equality if and only if $i \in \{n-1,n\}$.

The syntactic complexity results for reverse definite languages now follow directly from the finite/cofinite results.

\begin{theorem} Let $L = E \cup F \Sigma^*$ be a reverse definite language with state complexity $n \ge 1$. Then $\sigma(L) \le (n-1)!,$ and this bound is tight. Moreover, if this 
language $L$ achieves this upper bound and $n \ge 4$, then $|\Sigma| \ge (n-1)! - 2(n-2)!$, and this bound is tight.
\end{theorem}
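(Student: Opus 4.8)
The plan is to transport Theorems~\ref{thm:FC} and~\ref{thm:finAlpha} from $A_n$ to its two-sink analogue. By the discussion preceding the theorem, if $L$ is not cofinite then, under the indicated numbering, $T_L$ is a subsemigroup of
$$B_n=\{t\mid it>i \txt{for} 1\le i\le n-2,\ (n-1)t=n-1,\ nt=n\},$$
and $|B_n|=\prod_{i=1}^{n-2}(n-i)=(n-1)!$; if $L$ is cofinite the bound $\sig(L)\le (n-1)!$ is immediate from Theorem~\ref{thm:FC}. So the bound holds. For tightness I would take a DFA $\cD$ with states $1,\dots,n$, initial state $1$, final-state set $\{n\}$, the two sinks $n-1$ ($=\emp$) and $n$ ($=\Sig^*$), and letters realizing a generating set of $B_n$. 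Such a $\cD$ is minimal, because each state $i$ equals $\delta(1,w)$ for a suitable $w$, and for $i<j\le n-1$ the transformation of $B_n$ sending $i$ to $n$ and every other non-sink state to $n-1$ separates $i$ from $j$; moreover $L(\cD)$ is reverse definite, since any product of at least $n-2$ elements of $B_n$ maps state $1$ into the sink set $\{n-1,n\}$, so $\delta(1,uv)=\delta(1,u)$ for all $v$ whenever $|u|\ge n-2$, i.e.\ $L=E\cup F\Sig^*$ with $E,F$ finite. Thus $\sig(L(\cD))=|B_n|=(n-1)!$.

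For the lower bound on $|\Sig|$, assume $\sig(L)=(n-1)!$. If $L$ is cofinite then $T_L=A_n$ and Theorem~\ref{thm:finAlpha} already gives $|\Sig|\ge(n-1)!-(n-2)!\ge(n-1)!-2(n-2)!$; so assume $T_L=B_n$. As in the finite/cofinite case, since every letter already acts as an element of $B_n$, a transformation $t$ with $it=i+1$ for some $i\le n-3$ cannot arise from a word of length $\ge 2$: after the first letter the image of $i$ is already $\ge i+1$, and since $i+1\le n-2$ is not a sink it can only grow further. Hence every member of
$$H_n=\{t\in B_n\mid it=i+1 \txt{for some} i\in\{1,\dots,n-3\}\}$$
must be the transformation of a letter, so $|\Sig|\ge|H_n|$. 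Counting the complement of $H_n$ in $B_n$ — $n-i-1$ allowed values of $it$ for $i\le n-3$, and $2$ for $i=n-2$ — gives $|B_n\setminus H_n|=2(n-2)!$, so $|H_n|=(n-1)!-2(n-2)!$, which is the asserted lower bound.

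The remaining, and main, task is to show that $H_n$ generates $B_n$; then taking $H_n$ as the generating set in the construction above gives a reverse definite language of state complexity $n$ with $\sig=(n-1)!$ over an alphabet of size $(n-1)!-2(n-2)!$, so that bound is tight too. Given $t=[j_1,\dots,j_{n-2},n-1,n]\in B_n$: if $j_i=i+1$ for some $i\le n-3$ then $t\in H_n$. Otherwise $j_i\ge i+2$ for all $i\le n-3$, and I would use the generator $g^{*}=[2,3,\dots,n-2,n,n-1,n]\in H_n$, which sends $i\mapsto i+1$ for $i\le n-3$ and $n-2\mapsto n$, splitting on $j_{n-2}\in\{n-1,n\}$. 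If $j_{n-2}=n$, the equation $t=g^{*}u$ forces $u=[2,j_1,\dots,j_{n-3},n-1,n]$, which is in $H_n$ (since $1u=2$) and in $B_n$ (since $j_i\ge i+2$), so $t\in\langle H_n\rangle$. If $j_{n-2}=n-1$, the equation $t=u g^{*}$ forces, coordinatewise, $iu=j_i-1$ when $j_i\le n-2$, $iu=n-1$ when $j_i=n-1$, and $iu=n-2$ when $j_i=n$; each of these exceeds $i$ and none equals $n$, so $u\in B_n$ and the image of $u$ misses state $n$. Restricted to $\{1,\dots,n-1\}$ such a $u$ is an element of $A_{n-1}$, so by the proof of Theorem~\ref{thm:finAlpha} (valid since $n-1\ge3$) it is a product of generators in $G_{n-1}$, each of which, extended by $n\mapsto n$, sends some $i\le n-3$ to $i+1$ and hence lies in $H_n$; therefore $t=u g^{*}\in\langle H_n\rangle$. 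In all cases $t\in\langle H_n\rangle$, so $\langle H_n\rangle=B_n$.

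I expect the last paragraph to be the real obstacle. The single-generator ``shift, then truncate'' decomposition that works for $A_n$ does not carry over: the minimal element of $B_n$ sends $n-2\mapsto n-1$, so its powers get trapped at the sink $n-1$ and never reach state $n$, and thus cannot produce transformations with $(n-2)t=n$. The fix is the modified shift $g^{*}$ that jumps $n-2$ straight to $n$, together with the observation that one must peel $g^{*}$ off on the \emph{left} when $(n-2)t=n$ and on the \emph{right} when $(n-2)t=n-1$ — the right-hand case then drops to an $A_{n-1}$-problem already settled by Theorem~\ref{thm:finAlpha}. One should also check the smallest relevant case $n=4$, where $\{1,\dots,n-3\}=\{1\}$ and $H_4=\{[2,3,3,4],[2,4,3,4]\}$, to be sure the index sets never run empty.
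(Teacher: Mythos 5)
Your proposal is correct, and its skeleton is the same as the paper's: split into the cofinite case (where Theorems~\ref{thm:FC} and~\ref{thm:finAlpha} apply) and the two-sink case, work inside the semigroup you call $B_n$ (the paper's $A'_n$), exhibit a two-sink witness, observe that any transformation with $it=i+1$ for some $i\le n-3$ must be the image of a single letter, and count these to get $(n-1)!-2(n-2)!$. The one place where you genuinely diverge is the generation step, which is precisely where the paper is thinnest: it only asserts that ``a similar argument'' works, using $[2,3,\dots,n-1,n-1,n]$ and $[2,3,\dots,n,n-1,n]$ in place of the single shift of Theorem~\ref{thm:finAlpha}. You correctly identify why the verbatim single-shift decomposition fails (powers of $[2,3,\dots,n-1,n-1,n]$ get trapped at the sink $n-1$ and can never place $n$ in a coordinate $i\le n-2$), and you replace it with a complete argument: peel $g^*=[2,3,\dots,n,n-1,n]$ off the \emph{left} when $(n-2)t=n$ (the forced $u$ lies in your generating set since $1u=2$), and off the \emph{right} when $(n-2)t=n-1$, in which case the quotient transformation fixes $\{1,\dots,n-1\}$ setwise and the problem reduces to the already-proved fact that $G_{n-1}$ generates $A_{n-1}$ (valid since $n-1\ge 3$). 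I checked the two cases and the $n=4$ base case; they are sound, and this buys an explicit, verifiable proof of the generating claim the paper leaves as a remark. Two small points: your two-sink witness establishes tightness of $(n-1)!$ only for $n\ge 3$ (for $n=2$ state $n$ is unreachable), but the cases $n\le 2$ are trivial and the paper covers all $n$ with a cofinite witness obtained by complementing the finite one; and your symbols $B_n$, $H_n$ collide with the notation the paper reserves for definite languages in Section~\ref{sec:DEF}, so they should be renamed (the paper uses $A'_n$ and $G'_n$).
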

\begin{proof}
First, if $\emp$ is not a quotient of $L$, then $L$ is cofinite and hence has the same bounds as in the previous section. 
To find a cofinite witness $L$ meeting the bound $(n-1)!$, first find a finite witness $\ol{L}$ as in the proofs of Theorems \ref{thm:FC}
and~\ref{thm:finAlpha}, and then interchange its final and non-final states.

Otherwise, let $\cD$ be the minimal  DFA recognizing $L$, and let the states be totally ordered as in the preceding discussion. Define the set of transformations analogous to the finite case: 
$$A_n ' = \{t \mid it > i \; \forall\, i = 1, \dots, n-2, \text{ $(n-1)t = n-1$, and $nt = n$}\}.$$
Then $T_L \subseteq A_n '$, which a straightfoward calculation shows to be a semigroup.  
Clearly, $|A_n'| = (n-1)!$, thus proving the bound.

To find a witness for this case, start with the finite witness as in the proofs of Theorems \ref{thm:FC}
and~\ref{thm:finAlpha}, make all transitions from state $n-1$ to go to itself, and make state
$n$ the only final state.

For the minimal size of the alphabet, we define $G'_n \subset A'_n$ to be the set of transformations $t$ in $A'_n$ satisfying $it = i + 1$ for some $i = 1,\dots, n-3$. 
As in Section \ref{sec:FC}, these transformations must correspond to individual letters in $\Sigma$, hence proving the bound.
The same indirect counting argument shows that for $n \geq 4$, $|G'_n| = (n-1)! - 2 \cdot (n-2)!.$
A similar argument also shows that $G'_n$ generates $A'_n$ (using the transformations $[2,3,\dots, n-1, n-1, n]$ and $[2,3,\dots, n, n-1, n]$ in place of $[2,3,\dots, n-1, n,n]$).
Therefore the alphabet size bound is tight.
\qed
\end{proof}

\begin{example}
\label{ex:RD}
For $n=4$, the finite witness meeting the bound $(n-1)!$ has the transformation set
given in Example~\ref{ex:FC}.
We modify this set by making $n-1$ the sink state, thus obtaining
$$A'_4=\{{\bf [2,3,3,4]},{\bf [2,4,3,4]},[3,3,3,4],[3,4,3,4],[4,3,3,4],[4,4,3,4]\},$$ 
where the generators are in boldface, and state 4 is final.
\end{example}

\section{Definite Languages} \label{sec:DEF}
A \emph{definite language} is a language $L \subseteq \Sigma^*$ of the form $L = E \cup \Sig^*F$, where $E$ and $F$ are finite languages.
Like finite/cofinite and reverse definite languages, definite languages are characterized by their transformation semigroups.
In this case, every transformation of the minimal DFA of a regular language must be non-permutational. 
Conversely, if the transformation semigroup of a minimal DFA contains only non-permutational transformations, then it accepts a definite language.

Our goal for this section is to find the maximal size of a non-permutational transformation semigroup, that is, one which contains only non-permutational transformations.
There is a straightforward bijection between such transformations on $\{1,\dots, n\}$ and simple labeled forests on $n-1$ nodes. 
This can be seen by constructing the graph on $n$ nodes with edges $ij$ representing $it = j$, and then removing the unique node for which $it = i$.
Then Cayley's Theorem~\cite{Cay89,Sho95} shows that there are $n^{n-1}$ non-permutational transformations of $\{1,\dots, n\}$.

Identifying non-permutational transformations is not sufficient to find a syntactic complexity bound, as the set of such transformations does not form a semigroup for $n \ge 3$. 
For example, the composition of $s=[2,3,3]$ and $t=[1,1,2]$ is $st=[1,2,2]$, which is permutational.
Two transformations \emph{conflict} if there exists a permutational transformation in the semigroup that they generate.

We exhibit the following sets of non-permutational transformations which do not conflict; they are similar to the semigroup $A_n$ from Section \ref{sec:FC}.

\begin{theorem}\label{thm:def} Let $n > 1$, and define the following sets of transformations: 
$$ B_{n,k} = \{t \mid it > i \; \forall \; 1 \le i < k, \text{ and } \; it = k \; \forall \;  i \ge k\}, \quad k = 1, 2, 3, \dots, n. $$
Then the set of transformations $\displaystyle{B_n = \bigcup_{k=1}^n B_{n,k}}$ is a maximal non-permutational semigroup of size $\floor{e \cdot (n-1)!}$.
\end{theorem}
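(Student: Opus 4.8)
The plan is to split the proof into three parts: (1) verify that $B_n$ is a semigroup consisting only of non-permutational transformations; (2) compute $|B_n| = \floor{e\cdot(n-1)!}$; and (3) prove maximality, i.e.\ that no non-permutational transformation outside $B_n$ can be added while keeping the set a non-permutational semigroup.

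For part (1), I would first observe that each $B_{n,k}$ is the analogue of $A_k$ acting on $\{1,\dots,k\}$ with everything above $k$ collapsed onto $k$; in particular every $t\in B_{n,k}$ has $kt=k$ as its unique fixed point and is strictly increasing below $k$, so it is non-permutational (a permutational transformation would require a cycle of length $\ge 2$, impossible when the map is strictly increasing on $\{1,\dots,k-1\}$ and constant at $k$ on $\{k,\dots,n\}$). For closure, take $s\in B_{n,j}$ and $t\in B_{n,k}$ and compute $st$: I expect the image of $st$ to lie in $\{1,\dots,\min(j,k)\}$ and $st$ to be strictly increasing below $\min(j,k)$ with all larger points sent to $\min(j,k)$, so $st\in B_{n,\min(j,k)}\subseteq B_n$. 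This is a short direct calculation; the key point is that composing a strictly-increasing-then-constant map with another such map again strictly increases below the smaller threshold. Hence $B_n$ is a non-permutational semigroup.

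For part (2), $|B_{n,k}|$ equals the number of ways to choose $it$ for $1\le i<k$ subject to $i<it\le k$, which gives $(k-1)\cdot(k-2)\cdots 1=(k-1)!$, except that $B_{n,1}$ has size $1=0!$. So $|B_n|=\sum_{k=1}^n (k-1)! = \sum_{j=0}^{n-1} j!$ (these sets are disjoint since they have distinct unique fixed points, or distinct image sets). I would then show $\sum_{j=0}^{n-1} j! = \floor{e\cdot(n-1)!}$: writing $e\cdot(n-1)! = \sum_{j\ge 0}(n-1)!/j!$, the terms with $j\le n-1$ are the integers $(n-1)!/j! = (n-1)(n-2)\cdots(j+1)$, whose sum is $\sum_{j=0}^{n-1}(n-1)!/j!$, and this differs from $\sum_{j=0}^{n-1} j!$ only by re-indexing $j\mapsto n-1-j$; meanwhile the tail $\sum_{j\ge n}(n-1)!/j! = \tfrac1n + \tfrac1{n(n+1)}+\cdots < 1$, so the floor is exactly the integer part. (This is the standard identity behind $\floor{e\cdot m!}=\sum_{i=0}^{m} m!/i!$ for $m\ge 1$.)

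Part (3), maximality, is the main obstacle. Here I would argue that if $u$ is any non-permutational transformation with $u\notin B_n$, then $B_n\cup\{u\}$ generates some permutational transformation. The structure of a non-permutational $u$ is a forest: it has a unique fixed point $p$, and following $u$ from any state eventually reaches $p$ without cycling. Since $u\notin B_{n,p}$, there must be some state $i\ne p$ with $iu\le i$ (for if $iu>i$ for all $i\ne p$ and $iu=p$ for $i$ at least $p$... one must be careful about the exact failure mode, and also about states above $p$ not mapping directly to $p$). The idea is to compose $u$ with a suitably chosen element of $B_n$ — typically one of the "collapsing" transformations in some $B_{n,k}$, or a strictly increasing one — to expose a $2$-cycle: pick $k$ so that the relevant "bad" pair of states $i<j$ survives into $\{1,\dots,k\}$ and is swapped or cyclically permuted by the composite. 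The delicate part is the case analysis on how $u$ fails to be in $B_n$ (wrong fixed point, not increasing below the fixed point, or a state $\ge p$ not sent to $p$), and exhibiting in each case an explicit element of $B_n$ whose composition with $u$ (on one side or the other, possibly iterated) is permutational. I expect this to require the observation that $B_n$ contains, for every $k$, both the "reset to $k$" map and enough strictly increasing maps on $\{1,\dots,k\}$ to move any chosen pair into a transposition. Once every such $u$ is shown to conflict with $B_n$, maximality follows.
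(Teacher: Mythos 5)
There is a genuine gap, and it originates in a misreading of the definition of $B_{n,k}$. For $i<k$ the only constraint is $it>i$; there is \emph{no} requirement that $it\le k$. (Compare Example~\ref{ex:D}: $B_{4,2}=\{[2,2,2,2],[3,2,2,2],[4,2,2,2]\}$ has three elements, with $1t$ ranging over all of $\{2,3,4\}$.) Your count $|B_{n,k}|=(k-1)!$ is therefore wrong: since $it$ may be any element of $\{i+1,\dots,n\}$ for $i<k$, the correct count is $(n-1)(n-2)\cdots(n-k+1)=(n-1)!/(n-k)!$, giving $|B_n|=\sum_{k=1}^{n}(n-1)!/(n-k)!=\sum_{l=0}^{n-1}(n-1)!/l!=\floor{e\cdot(n-1)!}$. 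Your sum $\sum_{j=0}^{n-1}j!$ is a genuinely different (smaller) number --- for $n=4$ it gives $10$, not $16$ --- and the ``re-indexing $j\mapsto n-1-j$'' you invoke to equate it with $\sum_{j=0}^{n-1}(n-1)!/j!$ is invalid, because $(n-1)!/(n-1-j)!\ne j!$ in general. The same misreading infects part (1): the image of $s\in B_{n,j}$ need not lie in $\{1,\dots,j\}$, and the composite of $s\in B_{n,j}$ with $t\in B_{n,k}$ does not in general lie in $B_{n,\min(j,k)}$. For instance $s=[3,2,2,2]\in B_{4,2}$ and $t=[2,3,4,4]\in B_{4,4}$ give $st=[4,3,3,3]\in B_{4,3}$. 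Closure still holds, but the correct bookkeeping is that for $i<j$ one has $t_it_j\in B_{n,\,it_j}$ and $t_jt_i\in B_{n,i}$; your stated reason is false as given.

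Part (3) is a plan rather than a proof. You identify the right strategy --- show every non-permutational $u\notin B_n$ conflicts with $B_n$ --- and correctly flag that the crux is a case analysis on how $u$ fails to lie in $B_n$, but you do not carry that analysis out. The missing content is exactly the substance of the paper's argument: letting $i$ be the fixed point of $u$, one splits on whether some $j<i$ has $ju<j$ (in which case a single $t'\in B_{n,i}$ sending $ju$ back to $j$ makes $ut'$ fix both $i$ and $j$, hence permutational), and otherwise follows the forward orbit of some $j>i$ with $ju\ne i$, producing an explicit $t'\in B_n$ such that $ut'$ or $t'u$ is permutational according to whether the orbit drops below $i$ or returns to $i$. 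Without exhibiting these witnesses, maximality is unproven. In summary: part (2) is incorrect as stated, part (1) reaches the right conclusion by an incorrect intermediate claim, and part (3) is incomplete.
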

\begin{proof}
One can check that each $B_{n,k}$ is a semigroup. Let $t_i \in B_{n,i}$ and $t_j \in B_{n,j}$, with $i < j$. A direct computation shows that $t_it_j \in B_{n,it_j}$, and $t_jt_i \in B_{n,i}$; hence $B_n$ is a semigroup. Moreover, for all $t \in B_{n,k}$, $t^{k-1} = \binom{Q}{k}$, and so all of the transformations are non-permutational.

A simple counting argument shows that 
$$|B_{n,k}| = (n-1)(n-2)\cdots (n - k + 1) = \frac{(n-1)!}{(n-k)!}.$$ 
Since the $B_{n,k}$ are disjoint,
$$|B_n| = \sum_{k=1}^n \frac{(n-1)!}{(n-k)!} \nonumber \\
= \sum_{l=0}^{n-1} \frac{(n-1)!}{(n-1-l)!} \nonumber \\
= \floor{e \cdot (n-1)!}. $$

For the maximality of $B_n$, we show that adding any other non-permutational transformation creates a conflict. Let $t \notin B_n$ be non-permutational, with $it = i$. 

First suppose that there exists a $j < i$ with $jt = k \le j$. Since $t$ is non-permutational, we may assume $k < j$. Then there exists a $t' \in B_{n,i}$ with $kt' = j$; then $itt' = i$ and $jtt' = j$, and  so $t$ and $t'$ conflict.

If no such $j$ exists, then  there must exist a $j > i$ with $jt \neq i$. Consider the sequence defined by $j_0 = j$, $j_l = j_{l-1} t$. If there exists an $l$ such that $j_lt = j_{l+1} < i$, let $l$ be the minimal one. Let $t' \in B_{n,j_l}$ with $j_{l+1}t' = i$ and $it' = j_l$. Then $itt' = j_l$,  $j_ltt' = i$, and so $tt'$ is permutational. 
Now suppose all $j_l \geq i$. Since $t$ is non-permutational, $i$ must appear in the sequence; moreover, since $j_1 = jt \neq i$, we can pick $l \geq 0$ so that $i = j_{l+2}$. Since $j_{l+1} > i$, we may find a transformation $t' \in B_{j,j_l}$ with $it' = j_{l+1}$ and $j_{l+1} t' = j_l$.
Then $it't = i$,  $kt't = k$, and $t't$ is permutational. 
\qed
\end{proof}

To compute the generators of $B_n$, we require the following definition.
Let $C_n$ be the set of all transformations $t = [i_1,\dots, i_n] \in B_n$ with all $i_j < n$.
Define the function $\alpha:C_n\to B_n$  by $\alpha(t) = [i_1 + 1,\dots, i_n + 1]$, and also 
$$\alpha(C_n) = \{t \in B_n \mid \alpha(t_0) = t \text{ for some $t_0 \in C_n$}\}.$$
Clearly, $\alpha$ is a bijection between $C_n$ and $\alpha(C_n)$.

\begin{theorem}
Let $H_n = B_n \backslash \alpha(C_n)$. Then 
\begin{enumerate}
\item[(1)] $H_n$ is the minimum set of generators for $B_n$. 
\item[(2)] $|H_n| = \floor{e \cdot (n-1)!} - \floor{e \cdot (n-2)!}.$
\end{enumerate}
\end{theorem}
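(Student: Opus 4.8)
The plan is to prove the two parts separately, but both rest on understanding which transformations in $B_n$ are "reachable" as nontrivial compositions. For part (1), I would first argue that $H_n = B_n \setminus \alpha(C_n)$ generates $B_n$. The key observation is that every $t \in \alpha(C_n)$ can be written as $t = \alpha(t_0)$ for some $t_0 \in C_n$, and $\alpha$ is realized by right-composition with the "shift" transformation $\shu$-style map: if we set $\beta_k = [2,3,\dots,k,k,\dots,k] \in B_{n,k}$ appropriately, then composing with the shift $[2,3,\dots,n,n]$-type element increments images. More precisely, I expect that $\alpha(t_0) = t_0 \cdot r$ where $r$ is the fixed transformation sending $i \mapsto i+1$ for $i<n$ and $n \mapsto n$ — but $r \notin B_n$ in general, so one must be careful. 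The right statement is probably that for $t \in B_{n,k}$ with all images $<n$, we have $\alpha(t) \in B_{n,k+1}$ (or $B_{n,k}$ if $k$ itself shifted), and $\alpha(t)$ factors through $t$ composed with a generator already in $H_n$. Iterating this "un-shifting" reduces any element of $B_n$ to a product of elements of $C_n$'s complement; since each $C_n$-reduction strictly decreases some numerical potential (e.g. $\sum i_j$), the process terminates, showing $H_n$ generates.

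Next, for minimality, I would show no proper subset of $H_n$ generates $B_n$ — equivalently, each $t \in H_n$ is \emph{indecomposable}, meaning $t$ cannot be written as $t_1 t_2$ with $t_1, t_2 \in B_n$ both of length $\geq 1$ unless one factor "does nothing relevant." The intuition parallels Section~\ref{sec:FC}: in $A_n$, a transformation with $it = i+1$ for some $i \leq n-2$ had to be a letter. Here, membership in $H_n$ should be exactly the condition that $t$ has some coordinate $i_j$ that is "minimal" in a way that cannot be produced by composition. I would characterize $H_n$ concretely: $t = [i_1,\dots,i_n] \in H_n$ iff $t \in B_n$ but $t \notin \alpha(C_n)$, i.e. either some $i_j = n$ already, or there is no $t_0 \in C_n$ with $\alpha(t_0) = t$ — the latter happening iff some $i_j = 1$ (can't be the image of $i_j - 1 = 0$) — wait, more carefully, $\alpha$ adds 1 everywhere, so $t \in \alpha(C_n)$ iff every $i_j \geq 2$ and the "un-shifted" $[i_1 - 1, \dots, i_n-1]$ lies in $C_n \subseteq B_n$. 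So $H_n$ consists of those $t \in B_n$ for which $[i_1-1,\dots,i_n-1] \notin B_n$, i.e. roughly those with $1$ in their image or with the wrong block structure after un-shifting. I would then show each such $t$ is indecomposable by a direct argument: any factorization $t = t_1 t_2$ forces $t_2$ to increment-or-fix in a way making $t \in \alpha(C_n)$, contradiction.

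For part (2), the cardinality is a bijective-counting exercise once part (1) pins down $H_n$. We have $|B_n| = \floor{e \cdot (n-1)!}$ from Theorem~\ref{thm:def}, so it suffices to show $|\alpha(C_n)| = |C_n| = \floor{e \cdot (n-2)!}$. Since $\alpha$ is a bijection $C_n \to \alpha(C_n)$, I only need $|C_n| = \floor{e \cdot (n-2)!}$. Now $C_n$ is the set of $t \in B_n$ with all images strictly less than $n$; intersecting with each $B_{n,k}$, a transformation in $B_{n,k}$ has images in $\{k+1,\dots,n\}$ for the moving part, so requiring all images $< n$ means $k \leq n-1$ and the $i$-th image (for $i<k$) ranges over $\{i+1,\dots,n-1\}$, giving $(n-2)(n-3)\cdots(n-k)$ choices $= \frac{(n-2)!}{(n-1-k)!}$. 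Summing, $|C_n| = \sum_{k=1}^{n-1} \frac{(n-2)!}{(n-1-k)!} = \sum_{l=0}^{n-2}\frac{(n-2)!}{(n-2-l)!} = \floor{e \cdot (n-2)!}$, exactly the computation from Theorem~\ref{thm:def} with $n$ replaced by $n-1$. Hence $|H_n| = |B_n| - |\alpha(C_n)| = \floor{e \cdot (n-1)!} - \floor{e \cdot (n-2)!}$.

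The main obstacle I anticipate is part (1)'s minimality: showing every element of $H_n$ is genuinely indecomposable. The generation direction is a termination argument that should go smoothly, and part (2) is routine counting. But proving indecomposability requires a clean characterization of $H_n$ and a careful case analysis over the block structure of $B_{n,k}$ — in particular handling the constant transformations $\binom{Q}{k}$ and the boundary cases $k=1$ and $k = n$ — to verify that no $t \in H_n$ arises as a nontrivial product. I would structure this by first proving the concrete description "$t \in H_n \iff$ the coordinatewise-decremented tuple is not in $B_n$," and then showing directly that such $t$ cannot be a product, mirroring the length-1 argument after Theorem~\ref{thm:FC}.
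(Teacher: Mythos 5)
Your proposal follows essentially the same route as the paper: generation by un-shifting via powers of $[2,3,\dots,n,n]$ (which, note, does lie in $B_{n,n}\subseteq B_n$, so your worry on that point is unfounded), minimality by showing every nontrivial product lands in $\alpha(C_n)$, and the block-by-block count of $|C_n|$ exactly as you compute it. The one step you flag as the main obstacle is dispatched in the paper by the single observation that every $t\in B_{n,k}$ satisfies $mt\ge\min\{k,m+1\}$, from which a short two-case computation shows $\alpha^{-1}(t_it_j)\in B_{n,j-1}$ for all $t_i\in B_{n,i}$ and $t_j\in B_{n,j}$ with $i\ge j$ (and similarly for $t_jt_i$), i.e., every composition of two elements of $B_n$ lies in $\alpha(C_n)$ and hence misses $H_n$.
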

\begin{proof}
For (1), note that $[2,3,\dots, n,n] \in H_n$. For any $t \in B_n$, we can write $t = t_0 [2,3,\dots, n,n]^k$ with $k \geq 0$ and $t_0 \in H_n$, as in the proof of Theorem \ref{thm:finAlpha}. Therefore $H_n$ generates $B_n$.

Now let $t_i \in B_{n,i}$ and $t_j \in B_{n,j}$, with $i \ge j$. We consider $mt_it_j$, and use the fact that each transformation $t \in B_{n,k}$ satisfies $mt \geq \min \{k, m + 1\}$. There are two cases:
\begin{enumerate}
\item[(a)] If $m \ge j - 1$, then $mt_i \ge \min \{i, m + 1\} \ge j$, hence $mt_i t_j = j$.
\item[(b)] If $m \le j - 2 < i$, then $mt_i \ge m + 1$, hence $mt_i t_j \ge \min \{j, mt_i + 1\} \geq m + 2$.
\end{enumerate}
It follows that $\alpha^{-1}(t_it_j) \in B_{n, j-1}$; a similar argument shows that $\alpha^{-1}(t_jt_i) \in B_{n, jt_i - 1}$. 
Consequently, no transformation in $H_n$ is a composition of two others in $B_n$, and so $H_n$ is the minimum generating set of $B_n$.

For (2), we calculate $|\alpha(C_n)|$, or equivalently $|C_n|$ because $\alpha$ is a bijection. A counting argument shows that $|B_{n,k} \cap C_n| = \frac{(n-2)!}{(n - 2 - (k-1))!}$. Therefore
$$|H_n| = |B_n| - |\alpha(C_n)| = |B_n| - \sum_{k=1}^{n-1} \frac{(n-2)!}{(n-2 - (k-1))!}
= \floor {e \cdot (n-1)!} - \floor {e \cdot (n-2)!}. $$
\qed
\end{proof}

The following corollary establishes a direct connection with definite languages.

\begin{cor} For all $n > 1$, there exists a definite language $L$ with state complexity $n$, syntactic complexity $\sigma(L) = \floor{e \cdot (n-1)!},$ and alphabet size $\floor{e \cdot (n-1)!} - \floor{e \cdot (n-2)!}.$
\end{cor}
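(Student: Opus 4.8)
The plan is to realize the abstract semigroup $B_n$ of Theorem~\ref{thm:def} as the transition semigroup $T_L$ of a concrete minimal DFA, mimicking the construction used in the proof of Theorem~\ref{thm:FC}. First I would take $Q=\{1,2,\dots,n\}$, set the alphabet $\Sigma$ to be the minimal generating set $H_n$ (of size $\floor{e\cdot(n-1)!}-\floor{e\cdot(n-2)!}$ by the previous theorem), and let each letter act on $Q$ as the corresponding transformation in $H_n\subseteq B_n$; since $H_n$ generates $B_n$, the transition semigroup of this DFA is exactly $B_n$, which is non-permutational, so whatever language it recognizes is definite. It remains to fix the initial state and the final-state set so that (i) the DFA is minimal, i.e.\ all states reachable and all pairs distinguishable, and (ii) the resulting $L$ genuinely has $T_L=B_n$ rather than a proper quotient.

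For \emph{reachability}, I would take the initial state to be $1$: the transformation $[2,3,\dots,n,n]\in H_n$ pushes $1$ forward one step at a time, so every state $i$ is reached from $1$ by a power of this letter. For \emph{distinguishability}, note that for any $1\le i<j\le n$ the constant transformation $\binom{Q}{i}$ lies in $B_{n,i}\subseteq B_n$ (it equals $t^{\,i-1}$ for a suitable $t\in B_{n,i}$, as observed in the proof of Theorem~\ref{thm:def}); more usefully, for any state $k$ there is a transformation in $B_n$ sending $k$ to $n$ and, separately, one sending a chosen single state to $n$ while fixing the rest at their images — so by choosing $F$ appropriately (for concreteness, $F=\{n\}$, with $n$ a sink since $nt=n$ need not hold in $B_n$; more carefully one picks $F$ to be a single non-sink state as in Theorem~\ref{thm:FC}) one can separate any two states via a word mapping exactly one of them into $F$. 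I would spell this out exactly as in the minimality argument of Theorem~\ref{thm:FC}: state $i$ and state $j$ with $i<j$ are distinguished by the transformation that sends $i$ to the final state and all other states elsewhere, which exists in $B_n$ because $B_n$ contains transformations with prescribed small images.

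Finally, once the DFA is minimal, $T_L$ is by definition the transition semigroup, which we built to be $B_n$; hence $\sigma(L)=|B_n|=\floor{e\cdot(n-1)!}$ and $|\Sigma|=|H_n|=\floor{e\cdot(n-1)!}-\floor{e\cdot(n-2)!}$, as required. The main obstacle I anticipate is the careful bookkeeping in the distinguishability step: unlike $A_n$, the semigroup $B_n$ is a union of the blocks $B_{n,k}$ with differing "collapse points", so I must verify that for every pair of states there really is a transformation in \emph{some} $B_{n,k}$ that separates them with respect to the chosen $F$ — in particular that the choice of final-state set is compatible with the structure of all the $B_{n,k}$ simultaneously. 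This is routine but needs a clean uniform argument (e.g.\ pick $t\in B_{n,n-1}$ acting as the identity-shifted map that isolates one state), rather than an ad hoc case check.
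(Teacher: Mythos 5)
Your construction is essentially the paper's: alphabet $H_n$ with each letter acting as the corresponding generator, initial state $1$, $F=\{n\}$, transition semigroup $B_n$, and definiteness following from non-permutationality. The one step you leave open (distinguishability) is closed in the paper by a single explicit transformation: for $i<j<n$ take $t\in B_{n,n}$ with $kt=k+1$ for $k\le i$ and $kt=n$ for $k>i$, so that $jt=n\in F$ while $it=i+1\notin F$; state $n$ itself is separated from all others by $\varepsilon$, being the unique final state.
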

\begin{proof}
Let $\cD = (Q, \Sig, \delta, q_1, F)$ be a DFA with $Q = \{1,2,\dots,n\}$, $q_1 = 1$, $F = \{n\}$, and $|\Sigma| = \floor {e \cdot (n-1)!} - \floor {e \cdot (n-2)!}$ with each letter representing a different transformation in $H_n$, so that the transformation semigroup of $\cD$ is $B_n$. We claim that this is a minimal DFA of a definite language. First, all the states are reachable by the constant transformations $\binom{Q}{i} \in B_n$. Also, any two states $i,j$ with $i < j < n$ are distinguishable by the transformation $t \in B_n$ which acts as $kt = k + 1$ for $1 \le k \le i$, and $kt = n$ for $k > i$. State $n$ is distinguishable from every other state because it is the only final state. Hence $\cD$ is minimal. Then by Theorem \ref{thm:defEquiv}, $\cD$ accepts a definite language.
\qed
\end{proof}

\begin{conj} Let $L$ be a definite language with state complexity $n > 1$. Then $\sigma(L) \le \floor{e \cdot (n-1)!},$ and if equality holds then $|\Sigma| \ge \floor{e \cdot (n-1)!} - \floor{e \cdot (n-2)!}.$
\end{conj}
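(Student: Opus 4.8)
\medskip
\noindent\textbf{A strategy for the conjecture.}
Since $\sigma(L)=|T_L|$ and, for a definite language, $T_L$ consists only of non-permutational transformations of $Q=\{1,\dots,n\}$, while conversely Theorem~\ref{thm:def} and the Corollary above exhibit a definite witness with $|T_L|=|B_n|=\floor{e\cdot(n-1)!}$ and the stated alphabet size, the conjecture reduces to two purely semigroup-theoretic claims: (i) every non-permutational subsemigroup $S$ of the full transformation semigroup on $n$ points satisfies $|S|\le\floor{e\cdot(n-1)!}$; and (ii) every non-permutational semigroup $S$ with $|S|=\floor{e\cdot(n-1)!}$ requires at least $\floor{e\cdot(n-1)!}-\floor{e\cdot(n-2)!}$ generators. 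By Theorem~\ref{thm:def} the semigroup $B_n$ attains the bound in (i), so (i) is the assertion that no maximal non-permutational semigroup is strictly larger than $B_n$.

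First I would develop the structure theory of non-permutational semigroups. A transformation is non-permutational exactly when it has a unique fixed point --- its \emph{root} --- and no cycle of length $\ge 2$; \ie\ its functional graph is a tree directed towards the root. A short argument shows that every idempotent of a non-permutational semigroup $S$ is a constant map, so for each $t\in S$ the idempotent power $t^\omega$ is the constant $\binom{Q}{c(t)}$, with $c(t)$ the root of $t$. Put $J=\{c(t):t\in S\}=\{j:\binom{Q}{j}\in S\}$. One checks that $J$ is $S$-invariant ($Jt\subseteq J$ for all $t\in S$) and that $S|_J$ is again non-permutational and contains every constant on $J$; call such a semigroup \emph{full}. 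This points to an induction on $n$: when $|J|<n$, bound $|S|$ by combining the inductive bound for $S|_J$ on the smaller set $J$ with a count of the extensions of each element across the transient states of $Q\setminus J$ --- no transient state is ever a root, which sharply restricts these extensions --- thereby reducing everything to the full case.

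The full case is the heart of the matter. Write $S=\bigsqcup_{j\in J}S_j$ with $S_j=\{t\in S:c(t)=j\}$. Because $|B_{n,k}|=(n-1)!/(n-k)!$ and $|B_n|=\sum_{k=1}^n (n-1)!/(n-k)!=\floor{e\cdot(n-1)!}$, it suffices to exhibit a linear order $1\prec 2\prec\cdots\prec n$ on $Q$, read off from $S$, for which $|S_j|\le (n-1)!/(n-j)!$ for every $j$; summing then gives (i). The order should be chosen so that smaller roots dominate in products, mirroring the identities $t_it_j\in B_{n,it_j}$ and $t_jt_i\in B_{n,i}$ (for $i<j$) established in the proof of Theorem~\ref{thm:def}, after which one argues --- essentially by the counting used for $|B_{n,k}|$ --- that no more transformations with a prescribed root can fit than already do in $B_n$. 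The principal obstacle, and the reason the conjecture is still open for general $n$, is precisely that an arbitrary non-permutational semigroup need not carry a linear order on its states compatible in this fashion; one must either produce such an order or replace it with a suitable global weighting, or with an injection of $S$ into a set of size $\floor{e\cdot(n-1)!}$. For $n\le 4$ both (i) and the determination of the extremal semigroups can be obtained by a finite, computer-assisted enumeration of the maximal non-permutational semigroups.

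For (ii), the structural analysis above should yield that, up to renaming the states, the only non-permutational semigroup of size $\floor{e\cdot(n-1)!}$ is $B_n$. Granting this, (ii) follows at once from the theorem identifying $H_n=B_n\setminus\alpha(C_n)$ as the minimum generating set of $B_n$: since that proof shows no element of $H_n$ is a composition of two elements of $B_n$, the set $H_n$ is contained in every generating set, so $|\Sigma|\ge|H_n|=\floor{e\cdot(n-1)!}-\floor{e\cdot(n-2)!}$, the value realized by the witness of the Corollary above. Should uniqueness of the maximizer prove too strong, it suffices instead to bound from below the minimum generating set of each maximal-size non-permutational semigroup, again using that its generators are non-decomposable. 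Either way, once (i) and the shape of the extremal semigroups are under control, the alphabet bound is the routine part.
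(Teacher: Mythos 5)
There is a genuine gap, and you have in fact named it yourself. The statement you are addressing is stated in the paper only as a conjecture: the paper proves the lower bound by exhibiting the witness semigroup $B_n$ with generating set $H_n$ (Theorem~\ref{thm:def} and the corollary), and verifies the upper bound and alphabet claim for $n\le 4$ by computational enumeration; it offers no general proof. Your reduction of the conjecture to the two semigroup-theoretic claims (i) and (ii) is correct, and your preliminary structure theory is sound: a transformation is non-permutational exactly when its functional graph is a tree directed to a unique fixed point, every idempotent of a non-permutational semigroup is a constant, the set $J$ of roots is $S$-invariant, and $S|_J$ is a full non-permutational semigroup on $J$. But none of this goes beyond reorganizing what the paper already establishes. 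The heart of (i) --- producing, for an \emph{arbitrary} non-permutational semigroup $S$, a linear order, weighting, or injection certifying $|S|\le\floor{e\cdot(n-1)!}$ --- is precisely the open content of the conjecture, and your text explicitly concedes that you do not know how to obtain it. Claim (ii) is then made to rest on the unproven uniqueness (up to relabeling) of the maximum-size non-permutational semigroup, or failing that on a lower bound for the rank of every extremal semigroup, neither of which is argued.

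Two further points deserve caution even within the programme. First, in the reduction from $|J|<n$ to the full case, the restriction map $S\to S|_J$ need not be injective, and bounding the fibers (``extensions across the transient states'') so that the totals still sum to $\floor{e\cdot(n-1)!}$ is itself a nontrivial counting problem you have not carried out; it is not clear the induction closes. Second, in the full case the desired inequality $|S_j|\le (n-1)!/(n-j)!$ presupposes the very ordering whose existence is the obstacle, so the argument is circular as it stands. In short: the proposal is a reasonable research plan consistent with the paper's partial results, but it proves the statement for no $n>4$; the only part actually within reach from what you wrote is the finite check for $n\le 4$, which is exactly what the paper already reports.
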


\begin{example}
\label{ex:D}
For $n=4$ we have the following transformations in $B_n$:
\begin{align}
B_{4,1} &= \{{\bf [1,1,1,1]}\}, \nonumber\\
B_{4,2} &= \{ [2,2,2,2], \bf{[3,2,2,2]}, {\bf [4,2,2,2]} \},  \nonumber\\
B_{4,3} &= \{{\bf[2,3,3,3]},{\bf [2,4,3,3]},[3,3,3,3],{\bf [3,4,3,3]}, [4,3,3,3],{\bf [4,4,3,3]}\}, \nonumber\\
B_{4,4} & =\{{\bf [2,3,4,4]},{\bf [2,4,4,4]}, {\bf [3,3,4,4]}, [3,4,4,4],{\bf [4,3,4,4]},[4,4,4,4]\}.
 \nonumber
\end{align}

The generators are shown in boldface.

\end{example}

\section{Conclusions and Future Work} \label{sec:conc}
Though we have found tight upper bounds on the syntactic complexity of finite/cofinite and reverse definite languages, 
we have only conjectured  the bounds on the syntactic complexity and the corresponding alphabet size  for definite languages. The conjecture has been verified through computational enumeration for $n \le 4$, but remains unproven for $n > 4$. Also, syntactic complexity bounds have yet to be found for the related higher classes in the dot-depth hierarchy of star-free languages, namely the generalized definite and locally testable languages. It is possible that the technique used in this paper---characterize allowable transformations in the syntactic semigroup and apply combinatorial arguments to count them---can be used to find bounds for these languages as well.

\providecommand{\noopsort}[1]{}


\begin{thebibliography}{10}

\bibitem{BHK09}
Bordihn, H., Holzer, M., Kutrib, M.:
\newblock Determination of finite automata accepting subregular languages.
\newblock Theoret. Comput. Sci. \textbf{410} (2009)  3209--3249

\bibitem{Brz63}
Brzozowski, J.:
\newblock Canonical regular expressions and minimal state graphs for definite
  events.
\newblock In: Proceedings of the Symposium on Mathematical Theory of Automata.
  Volume~12 of MRI Symposia Series, Polytechnic Press, Polytechnic Institute of
  Brooklyn, N.Y. (1963)  529--561

\bibitem{BrLi11}
Brzozowski, J., Li, B.:
\newblock Syntactic complexities of some classes of star-free languages.
\newblock In Kutrib, M., Moreira, N., Reis, R., eds.: Proceedings of the 14th
  International Workshop on Descriptional Complexity of Formal Systems
  $($DCFS\/$)$. Volume 7386 of LNCS, Springer (2012)  117--129

\bibitem{BLY11}
Brzozowski, J., Li, B., Ye, Y.:
\newblock Syntactic complexity of prefix-, suffix-, bifix, and factor-free
  languages.
\newblock Theoret. Comput. Sci. (2012) In press.

\bibitem{BrSi73}
Brzozowski, J., Simon, I.:
\newblock Characterizations of locally testable events.
\newblock Discrete Math. \textbf{4}(3) (1973)  243--271

\bibitem{BrYe11}
Brzozowski, J., Ye, Y.:
\newblock Syntactic complexity of ideal and closed languages.
\newblock In Mauri, G., Leporati, A., eds.: 15th International Conference on
  Developments in Language Theory, DLT 2011. Volume 6795 of LNCS, Springer
  (2011)  117--128

\bibitem{Cay89}
Cayley, A.:
\newblock A theorem on trees.
\newblock Quart. J. Math. \textbf{23} (1889)  376--378

\bibitem{Gin66}
Ginzburg, A.:
\newblock Abour some properties of definite, reverse definite and related
  automata.
\newblock IEEE Trans. Electronic Comput. \textbf{EC--15} (1966)  806--810

\bibitem{HoKo04}
Holzer, M., K\"{o}nig, B.:
\newblock On deterministic finite automata and syntactic monoid size.
\newblock Theoret. Comput. Sci. \textbf{327}(3) (2004)  319 -- 347

\bibitem{Kle54}
Kleene, S.C.:
\newblock Representation of events in nerve nets and finite automata.
\newblock In Shannon, C.E., McCarthy, J., eds.: Automata Studies.
\newblock Princeton University Press, Princeton, NJ (1954)  3--41

\bibitem{KLS03}
Krawetz, B., Lawrence, J., Shallit, J.:
\newblock State complexity and the monoid of transformations of a finite set
  (2003) {\tt http://arxiv.org/abs/math/0306416v1}.

\bibitem{Mas70}
Maslov, A.N.:
\newblock Estimates of the number of states of finite automata.
\newblock Dokl. Akad. Nauk SSSR \textbf{194} (1970)  1266--1268 (Russian).
  English translation: Soviet Math. Dokl. {\bf 11} (1970), 1373--1375.

\bibitem{McPa71}
McNaughton, R., Papert, S.A.:
\newblock Counter-Free Automata. Volume~65 of M.I.T. Research Monographs.
\newblock The MIT Press (1971)

\bibitem{Myh57}
Myhill, J.:
\newblock Finite automata and representation of events.
\newblock Wright Air Development Center Technical Report \textbf{57--624}
  (1957)

\bibitem{PRS63}
Perles, M., Rabin, M.O., Shamir, E.:
\newblock The theory of definite automata.
\newblock IEEE Trans. Electronic Comput. \textbf{EC--12} (1963)  233--243

\bibitem{Pic35}
Piccard, S.:
\newblock Sur les fonctions d\'efinies dans les ensembles finis quelconques.
\newblock Fund. Math. \textbf{24} (1935)  298--301

\bibitem{Sho95}
Shor, P.W.:
\newblock A new proof of \mbox{Cayley's} formula for counting labeled trees.
\newblock Journal of Combinatorial Theory, Series A \textbf{71}(1) (1995)
  154--158

\end{thebibliography}
\end{document}